\newcommand{\B}{\mathbb{B}}
\newtheorem{definition}{Definition}
\newtheorem{example}{Example}
\newtheorem{theorem}{Theorem}
\newtheorem{corollary}{Corollary}
\patchcmd{\maketitle}{\@copyrightspace}{}{}{}
\tikzset{
	font={\footnotesize},
	vertex/.style={draw,circle,inner sep=0pt,minimum width=0.5cm,minimum height=0.5cm},
	zeroterm/.style={below,inner sep=0pt,font=\tiny}
}
\title{\huge{One Additional Qubit is Enough:} \\ \LARGE{Encoded Embeddings for Boolean Components in Quantum Circuits\vspace{-.4cm}}\thanks{This work has partially been supported by the European Union through the COST Action IC1405 and the Google Research Award Program.}}
\newcommand{\ket}[1]{\ensuremath{\left|#1\right\rangle}}
\author{\IEEEauthorblockN{Alwin Zulehner\IEEEauthorrefmark{1}, Philipp Niemann\IEEEauthorrefmark{2}, Rolf Drechsler\IEEEauthorrefmark{2}\IEEEauthorrefmark{3}, and Robert Wille\IEEEauthorrefmark{1}}
	\IEEEauthorblockA{\small \IEEEauthorrefmark{1}Institute for Integrated Circuits, Johannes Kepler University Linz, Austria}
	\IEEEauthorblockA{\small \IEEEauthorrefmark{2}Cyber-Physical Systems, DFKI GmbH, Bremen, Germany\\
		}
	\IEEEauthorblockA{\small \IEEEauthorrefmark{3}Department of Computer Science, University of Bremen, Bremen, Germany\\
		}
	\IEEEauthorblockA{\small alwin.zulehner@jku.at, philipp.niemann@dfki.de, drechsle@informatik.uni-bremen.de, robert.wille@jku.at}\vspace{-.6cm}
}
\begin{document}
\maketitle


\begin{abstract}

Research on quantum computing has recently gained significant momentum since first physical devices became available. 
Many quantum algorithms make use of \mbox{so-called} \emph{oracles} that implement Boolean functions and are queried with highly superposed input states in order to evaluate the implemented Boolean function for many different input patterns in parallel.
To simplify or enable a realization of these oracles in quantum logic in the first place, the Boolean reversible functions to be realized usually need to be broken down into several non-reversible sub-functions. However, since quantum logic is inherently reversible, these sub-functions have to be realized in a reversible fashion by adding further qubits in order to make the output patterns distinguishable (a process that is also known as \emph{embedding}). This usually results in a significant increase of the qubits required in total.
In this work, we show how this overhead can be significantly reduced by utilizing coding. More precisely, we prove that one additional qubit is always  enough to embed \emph{any} non-reversible function into a reversible one by using a variable-length encoding of the output patterns. Moreover, we characterize those functions that do not require an additional qubit at all.
The made observations show that coding often allows one to undercut the usually considered minimum of additional qubits in sub-functions of oracles by far.
\end{abstract}

\section{Introduction}
\label{sec:intro}

Quantum algorithms running on quantum computers allow for significant (exponential in the best case) speed-ups compared to their classical counterparts by exploiting quantum-mechanical phenomena like superposition, entanglement, and phase shifts~\cite{NC:2000}. 
Recently, devices that have been made publicly available---together with the commitment of companies like IBM, Google, Microsoft, and Rigetti---brought new momentum into a domain that has been considered as a ``dream of the future'' for a long time~\cite{courtland2017google,gomes2018quantumcomputing,hsu2018quantumcomputing}. Even though these first devices are limited in qubit fidelity and their number of qubits (i.e., they are classified as NISQ devices~\cite{preskill2018quantum}), they provide a first step towards building a fault-tolerant quantum computer that is capable of conducting hard and useful tasks in \mbox{non-exponential} time. 

Many proposed quantum algorithms contain large Boolean parts (also called \emph{oracles}) that are queried with a highly superposed input to gain quantum speed-up. Examples are the modular exponentiation in Shor's algorithm for integer factorization~\cite{Sho:94} or a Boolean description of the database that is queried in Grover's Algorithm~\cite{Gro:96}. In order to use these Boolean components on a quantum computer, they have to be described as quantum circuits (i.e.,~a sequence of quantum operations that are applied to the qubits)---an inherently reversible description means. Since it is very complex to determine a sequence of quantum operations (also denoted quantum gates) that realize the desired functionality (a process termed \emph{synthesis}~\cite{DBLP:conf/ismvl/DrechslerW11,DBLP:journals/csur/SaeediM13}), the Boolean function to be realized is usually decomposed into several (not necessarily reversible) sub-functions~\cite{DBLP:journals/qic/HanerRS17,DBLP:journals/qic/Beauregard03,haener2018quantum}. Hence, even though the overall functionality of the oracle is inherently reversible, its sub-components may not be. 

In order to realize non-reversible functions in quantum logic, further qubits (often called \emph{ancillary}, \emph{ancillae}, or \emph{working} qubits) have to be added in order to make the output patterns distinguishable and, hence, obtain a reversible function (a process called \emph{embedding}~\cite{Soeken:2015:ELB:2856147.2786982,ZulehnerW17Emb}). Moreover, such additional qubits are often used to store intermediate results and have to be restored to their initial state (by de-computing intermediate results) before ``leaving'' the oracle. All this obviously increases the number of qubits needed to realize the oracle. In fact, even if the embedding process guarantees a minimum of ancillary/ancillae/working qubits, their number is frequently quite substantial---a severe drawback since qubits are a highly limited resource.

In order to overcome the issue outlined above, we propose to utilize \emph{coded} embeddings where each occurring output pattern is encoded with another (smaller) unique pattern. This way, we utilize recently proposed embedding and synthesis schemes such as \emph{one-pass synthesis} of reversible logic~\cite{zulehner2017one} as well as 
synthesis exploiting coding techniques~\cite{ZulehnerW18Exploiting,zulehner2018pushing} for the realization of quantum oracles. Encoding outputs allows us to significantly reduce the number of qubits even below what is usually considered to be the minimum. Although this changes the intended functionality, using encoded values is still acceptable for the realization of oracles since subsequent sub-components just have to be slightly adjusted to handle the code, or need to be equipped with a small decoder beforehand (which often is easier to realize than the original functionality). 

Moreover, in this work we show for the first time that utilizing all that potential indeed allows for the realization of Boolean non-reversible sub-components with at most one additional qubit only. In addition to that, we exactly identify the cases where even this additional qubit is not necessary.
By this, we can provably show that, using coding, one additional qubit is enough, i.e.,~that the proposed scheme 
often allows one to undercut the usually considered minimum of additional qubits in oracles by far. This is additionally confirmed by experimental evaluations.
Note that while we only cover the two-valued case here---since this is the \mbox{de facto} standard in quantum computation and a large set of benchmarks is available for evaluation---we expect that our (theoretical) results can be extended to the multiple-valued case with a radix $r>2$ in a straightforward fashion (e.g., using the generalization as proposed in~\cite{zulehner2018generalizing}).

\vspace{200cm}

The remainder of this work is structured as follows. In Section~\ref{sec:background}, we briefly introduce the basics of quantum circuits as well as how non-reversible functions can be realized by them. Section~\ref{sec:one_is_enough} provides a technique for encoding the function to be realized. Here, we also formally prove that using a \mbox{variable-length} code indeed allows for realizations with at most one additional qubit.
Section~\ref{sec:results} compares the number of required qubits in coded embeddings to those embeddings (without encoding) that have been considered to be the minimum thus far. Section~\ref{sec:conclusions} concludes the paper.

\section{Background}
\label{sec:background}

In this section, we briefly recap the basics of quantum circuits, as well as how to realize Boolean components occurring in them.

\subsection{Quantum Circuits}
\label{sec:circuits}

Quantum computations are conducted by applying operations to qubits---entities that cannot only be in one of its two basis states (denoted~$\ket{0}$ and $\ket{1}$), but also in an (almost) arbitrary superposition of both.
Typical operations acting on a single qubits are negating the state of a qubit (NOT operation, denoted by $X$ or $\oplus$), setting a qubit into superposition (Hadamard operation, denoted by $H$), or conducting a phase shift by $i$ (denoted by $S$). Moreover, these operations may be controlled by other qubits. Then, the operation is only conducted if all controlling qubits are in basis state $\ket{1}$.
All these computations may be represented by means of circuit diagrams, where each qubit is represented by a horizontal line and quantum gates (i.e.,~operations that are applied to the qubits) on these lines determine (from left to right) in which order the respective operations are applied to the qubits.

\begin{figure}
	\begin{subfigure}[b]{0.48\linewidth}
		\begin{center}
		\qquad
		\qquad
		\mbox{	
			\Qcircuit @C=1em @R=1em {
				\lstick{\ket{q_0} = \ket{0}} & \gate{H} & \ctrl{1} & \qw \\
				\lstick{\ket{q_1} = \ket{1}} & \qw & \targ & \qw \\
		}}
		
	\end{center}
	
	\caption{Quantum Circuit}
	\label{fig:quantum_circuit}	
	\end{subfigure}\hfil
	\begin{subfigure}[b]{0.48\linewidth}
		\begin{center}
	\qquad
	\qquad
	\mbox{	
		\Qcircuit @C=0.5em @R=0.75em @!R {
			\lstick{q_0} & \ustick{0} \qw & \targ & \ustick{1} \qw & \ctrl{1} & \ustick{1} \qw & \targ     & \ustick{1} \qw & \ctrl{2} & \ustick{1} \qw & \qw \\
			\lstick{q_1} & \ustick{0} \qw & \qw   & \ustick{0} \qw & \targ    & \ustick{1} \qw & \ctrl{0}  & \ustick{1} \qw & \ctrl{0} & \ustick{1} \qw & \qw \\
			\lstick{q_2} & \ustick{0} \qw & \qw   & \ustick{0} \qw & \qw      & \ustick{0} \qw & \ctrl{-2} & \ustick{0} \qw & \targ    & \ustick{1} \qw & \qw \\
	}}
	
\end{center}

	\caption{Reversible Circuit}
	\label{fig:reversible_circuit}		
	\end{subfigure}

	\caption{Circuit Diagrams}
\end{figure}
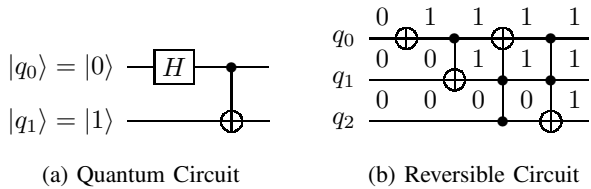

\begin{example}
	The quantum circuit shown in Fig.~\ref{fig:quantum_circuit}	is composed of two qubits and two gates. First, a Hadamard operation is applied to qubit $q_0$, setting $q_0$ into a superposition. Afterwards, a controlled NOT (CNOT) operation is conducted, where $q_0$ serves as control qubit and $q_1$ is the target qubit. Here, the value of $q_1$ is inverted if $q_0$ is in the basis state $\ket{1}$.
\end{example}

Reversible circuits are a subset of quantum circuits that can be modeled in the classical domain. 
Hence, these circuits are used when designing Boolean components for quantum circuits and are usually composed of multiple-controlled Toffoli gates. 
These gates are composed of a (possibly empty) set of control qubits and a so-called target qubit. The value of the target qubit is inverted if, and only if, all control qubits are in basis state $\ket{1}$. Hence, the CNOT gate discussed above is a multiple-controlled Toffoli gate with a single control.

\begin{example}
	Consider the reversible circuit shown in Fig.~\ref{fig:reversible_circuit} that is composed of three qubits and four gates. The target qubit of a Toffoli gate is again denoted by $\oplus$, whereas control qubits are denoted by $\bullet$.	
	Additionally, we have labeled the intermediate values of the qubits throughout the circuit when applying $\ket{q_0q_1q_2} = \ket{000}$ as input. Since a reversible circuit can be modeled in the classical domain (no quantum effects are exploited), we use $0$ and $1$ to indicate the basis states (rather than $\ket{0}$ and $\ket{1}$).
	The first gate has not control qubits and, thus, inverts the value of $q_0$ from 0 to 1. The second gate is controlled by $q_0$. Since the value of $q_0$ is 1, the value of $q_1$ is inverted from 0 to 1. The second gate does not affect the state of the qubits, since the control qubit $q_2$ is set to 0. Eventually, the last gate inverts the state of $q_2$ from 0 to 1.
\end{example}

\subsection{Boolean Components in Quantum Circuits}
\label{sec:embedding}

Typically, quantum circuits contain large Boolean components (also called \emph{oracles}) which can be realized by \emph{reversible circuits}. Decompositions of the gates occurring in reversible circuits into elementary quantum operations (e.g.,~into the well-known \emph{Clifford+T} library~\cite{boykin2000new}) can be determined using approaches such as~\cite{DBLP:journals/tcad/AmyMMR13}.

Since Boolean components occurring in quantum circuits commonly describe very complex functionality, they are usually split into several \mbox{non-reversible} parts (e.g.,~the modular exponentiation in Shor's algorithms can be build up from several adders)---either manually~\cite{DBLP:journals/qic/HanerRS17,DBLP:journals/qic/Beauregard03,haener2018quantum} or by automated synthesis tools (using methods as reviewed, e.g., in~\cite{DBLP:conf/ismvl/DrechslerW11,DBLP:journals/csur/SaeediM13}). But since quantum computations are inherently reversible, it has to be ensured that these sub-components are realized in a reversible fashion, i.e.,~as a function realizing a unique mapping from the inputs to the outputs and vice versa.

\begin{example}
	Consider the truth table of a half adder shown in Table~\ref{tab:tt_ha} and assume that this functionality shall be realized as a sub-function of an oracle. 
	Since the output pattern 01 occurs twice, the function is not reversible---the input cannot be determined uniquely having the output only.
\end{example}

\begin{table}
	\caption{Truth table of the half adder function}
	
	\begin{subtable}[t]{0.48\linewidth}
		\caption{Before embedding}
		\label{tab:tt_ha}
		
	\centering
		\begin{tabular}{cc|cc}
			$x_1$ & $x_2$ & $y_1$ & $y_0$ \\\hline
			0 & 0 & 0 & 0 \\
			0 & 1 & 0 & 1 \\
			1 & 0 & 0 & 1 \\
			1 & 1 & 1 & 0 \\
		\end{tabular}
	\end{subtable}
	\begin{subtable}[t]{0.48\linewidth}
		\caption{After embedding}
		\label{tab:tt_ha_embedded}

	\centering
	\begin{tabular}{ccc|ccc}
		$a$ & $x_1$ & $x_2$ & $g$ & $y_1$ & $y_0$ \\\hline
		0 & \bfseries 0 & \bfseries 0 & 0 & \bfseries 0 & \bfseries 0 \\
		0 & \bfseries 0 & \bfseries 1 & 1 & \bfseries 0 & \bfseries 1 \\
		0 & \bfseries 1 & \bfseries 0 & 0 & \bfseries 0 & \bfseries 1 \\
		0 & \bfseries 1 & \bfseries 1 & 1 & \bfseries 1 & \bfseries 0 \\
		1 & 0 & 0 & 0 & 1 & 0 \\
		1 & 0 & 1 & 0 & 1 & 1 \\
		1 & 1 & 0 & 1 & 0 & 0 \\
		1 & 1 & 1 & 1 & 1 & 1 \\
	\end{tabular}
	
	\end{subtable}
\end{table}

To ensure a unique input-output mapping, the non-reversible function to be realized is \emph{embedded} into a reversible one that typically has a much larger number of variables.\footnote{Note that each variable of the function is realized by means of a qubit in the quantum circuit.}
This embedding process can either be conducted explicitly~\cite{Soeken:2015:ELB:2856147.2786982,ZulehnerW17Emb} (required when using synthesis approaches such as~\cite{SWH+:2012,ZW2017RC}) or implicitly (using synthesis schemes following one-pass synthesis as employed in~\cite{zulehner2017one,ZulehnerW18Exploiting}). However, conducting the embedding often yields circuits where the number of additional variables and, hence, qubits is significant. 
Since qubits are a limited resource (especially in NISQ devices~\cite{preskill2018quantum}) their number shall be kept as small as possible. 
But even following the state of the art reviewed above, still a rather substantial number of qubits results.
In fact, the minimal number of qubits required for embedding thus far is defined as follows:

\begin{definition}
	Consider a Boolean function $f\colon\B^n\rightarrow \B^m$ with output patterns $p_1,p_2,\ldots,p_k \in  \B^m$ ordered by the number of corresponding input patterns (in the following denoted as $\mu(p_i)=|\{x \in \B^n \mid f(x)=p_i\}|$). Since the embedding process has to make all output patterns distinguishable, at least $\lceil\log_2\mu(p_1)\rceil$ additional so-called garbage outputs are required (where $p_1$ is the most frequently occurring output pattern). Moreover, since the number of inputs and outputs has to be equal to realize a reversible function as quantum circuit, a total of $\min(n, m+\lceil\log_2\mu(p_1)\rceil)$ qubits are required to embed a function $f\colon\B^n\rightarrow \B^m$. If this implies to add further inputs, the desired output is obtained when setting all ancillary inputs to a specific value (usually 0).
\end{definition}

\addtocounter{example}{-1}
\begin{example}[continued]
	Since the most frequently occurring output pattern $p_1 = 01$ occurs twice, $\lceil\log_2 2\rceil = 1$ garbage output is required to make this output pattern distinguishable. To align the number of inputs with the number of outputs, one ancillary input is required. 
	Table~\ref{tab:tt_ha_embedded} shows one possible embedding of the half adder function. The desired function can be obtained at the primary outputs by setting the ancillary input $a$ to~0 (highlighted in bold). All garbage variables as well as the primary outputs when $a\neq 0$ can be chosen arbitrarily as long as a reversible function results.
\end{example}

The ancillary qubits of all sub-functions of an oracle have to be de-computed to their initial state to allow for a correct execution within the oracle and  enable a later \mbox{reuse}.

\section{One Ancillary Qubit is Enough}
\label{sec:one_is_enough}

The authors of~\cite{ZulehnerW18Exploiting,zulehner2018pushing} have shown that it is possible to undercut the theoretical lower bound on the number of required qubits (discussed in Section~\ref{sec:embedding}) by using coding techniques, i.e.,~by using a 1-to-1 mapping of the output patterns to others. 
In this section, we first review the main idea of this approach and then formally prove that,
using a variable-length encoding, at most one ancillary qubit is enough to realize any desired non-reversible function---and, by this, any sub-component of an oracle. 
Afterwards, in Section~\ref{sec:results}, it is experimentally confirmed that this indeed allows one to significantly reduce the number of overall required qubits (even below the minimum considered thus far) in many cases.

\subsection{Utilizing Coding}
\label{sec:coding}

As shown in~\cite{ZulehnerW18Exploiting,zulehner2018pushing}, the number of additionally required output patterns can be significantly reduced by exploiting coding techniques. 
The general idea for using coding is motivated by the fact that usually not all output patterns occur equally many times and, thus, do not require the same number of garbage outputs. Hence, a variable-length encoding can be utilized, where frequently occurring output patterns are represented by a short code word (together with a large number of garbage outputs) and rarely occurring output patterns are represented by a longer code word (together with a smaller number of garbage outputs).

\begin{example}
	Consider the Boolean function with $n=3$ inputs and $m=3$ outputs shown in Table~\ref{tab:tt}. Using an embedding scheme as discussed in Section~\ref{sec:embedding} yields a reversible function with five variables (thus, requiring five qubits). However, using the code as shown in Table~\ref{tab:code} allows one to reduce the number of required qubits to three. For example, the most frequently occurring output pattern $p_1 = 110$ (which requires $\lceil \log_2 4 \rceil=2$ garbage outputs) is encoded as $code(p_1) = 0$, while the output pattern $p_3=100$ is encoded by $code(p_3) = 110$. The number of variables/qubits required for each output pattern is then determined by the sum of the code length and the number of required garbage outputs---resulting in the encoded function shown in Table~\ref{tab:coded_function} (dashes indicate garbage variables).
\end{example}

\begin{table}
	\caption{Encoding a non-reversible function}
	
	\centering
	\begin{subtable}[t]{0.30\linewidth}		
		\caption{Orig. function}
		\label{tab:tt}
		\centering
		\scriptsize
		\setlength{\tabcolsep}{1pt}
		\begin{tabular}{ccc|ccc}
			$x_3$ & $x_2$ & $x_1$& $x_3'$ & $x_2'$ & $x_1'$ \\\hline
			0 & 0 & 0 & 1 & 1 & 0\\
			0 & 0 & 1 & 0 & 0 & 0\\
			0 & 1 & 0 & 1 & 1 & 0\\
			0 & 1 & 1 & 1 & 0 & 0\\
			1 & 0 & 0 & 0 & 0 & 0 \\
			1 & 0 & 1 & 1 & 1 & 1 \\
			1 & 1 & 0 & 1 & 1 & 0\\
			1 & 1 & 1 & 1 & 1 & 0\\
		\end{tabular}
	\end{subtable}	
	\begin{subtable}[t]{0.30\linewidth}		
		\caption{Encoding}
\label{tab:code}
\centering
\scriptsize
\setlength{\tabcolsep}{3pt}
\begin{tabular}{c|c|c|c}
	$i$ & $p_i$ & $\mu(p_i)$ & $code(p_i)$ \\\hline
	1 & 110 & 4 & 0 - - \\
	2 & 000 & 2 & 1 0 - \\
	3 & 100 & 1 & 1 1 0 \\
	4 & 111 & 1 & 1 1 1 \\
\end{tabular}		
	\end{subtable}
\begin{subtable}[t]{0.38\linewidth}
	\caption{Encoded function}
\label{tab:coded_function}
\centering
\scriptsize
\setlength{\tabcolsep}{1pt}

\begin{tabular}{ccc|ccc}
	$x_3$ & $x_2$ & $x_1$& $x_3'$ & $x_2'$ & $x_1'$ \\\hline
	0 & 0 & 0 & 0 & - & -\\
	0 & 0 & 1 & 1 & 0 & -\\
	0 & 1 & 0 & 0 & - & -\\
	0 & 1 & 1 & 1 & 1 & 0\\
	1 & 0 & 0 & 1 & 0 & -\\
	1 & 0 & 1 & 1 & 1 & 1\\
	1 & 1 & 0 & 0 & - & -\\
	1 & 1 & 1 & 0 & - & -\\
\end{tabular}	
\end{subtable}

\vspace*{-3mm}
\end{table}

To generate a code as shown above, a \emph{Pseudo-Huffman encoding} is employed. To this end, one starts with terminal nodes---one for each output pattern with $\mu(p_i) > 0$ (no code has to be assigned to output patterns that do not occur)---and attaches a weight representing the number of  required garbage outputs (i.e.,~$\lceil\log_2\mu(p_i)\rceil$). The \emph{Pseudo-Huffman tree} is then generated by repeatedly combining the two nodes $a$ and $b$ with the smallest attached weights $w(a)$ and $w(b)$ to a new node $c$ with attached weight \mbox{$w(c) = \max(w(a), w(b))+1$} until a single node results. The weight of such a node $w(c)$ then gives the number of outputs required to represent all combined output patterns uniquely, i.e.,~one additional variable is required (aside from $\max(w(a), w(b))$) to distinguish between $a$ and $b$. Hence, the weight of the root node determines the number of overall required outputs in the encoded function.
Building the \emph{Pseudo-Huffman tree} inherently gives such a variable-length encoding of the output patterns by, e.g.,~assigning 0 (1) to the left (right) successor of each node. Concatenation of the values attached to the path from the root node to a terminal representing an output pattern $p_i$ determines $code(p_i)$. 

\begin{example}
	Figure~\ref{fig:huffman} shows the Pseudo-Huffman tree for the function shown in Table~\ref{tab:tt}. Since there exist four output patterns with $\mu(p_i) > 0$, we start with four terminal nodes (labeled $v_1$, $v_2$, $v_3$, and $v_4$, respectively) and attach the number of required garbage outputs as weights (drawn as numbers inside the nodes). First, we combine the nodes $v_3$ and $v_4$ to a new node $v_5$ with weight $w(v_5) = \max(0,0)+1=1$. Next, we combine the nodes $v_2$ and $v_5$ to a new node $v_6$ with weight $w(v_6) = \max(1,1)+1=2$. Eventually, the nodes $v_1$ and $v_6$ are combined to a node $v_7$ with weight $w(v_7) = \max(2,2)+1=3$---the single root node of the tree. The code for the individual output patterns is then determined by the path from the root node to the respective terminal. For example, output pattern $p_{2}=000$ is encoded by $code(p_{2}) = 10$ since the path traverses the right edge of node $v_7$ and the left edge of node $v_6$. Overall, the code shown in Table~\ref{tab:code} results. 
\end{example}

\begin{figure}
	\centering

		\begin{tikzpicture}[terminal/.style={draw,rectangle,inner sep=2pt}]
		\matrix[matrix of nodes,ampersand replacement=\&,every node/.style={vertex},column sep={1.5cm,between origins},row sep={1cm,between origins}] (qmdd) {
			\node (n1) {$3$}; \& \& \& \\
			\& \node(n2) {$2$}; \& \& \\
			\& \& \node (n3) {$1$}; \& \\
			\node (n4a) {$2$}; \& \node (n4b) {$1$}; \& \node (n4c) {$0$}; \& \node (n4d) {$0$}; \\
		};
		
		\draw (n1) -- (n4a) node[midway, left] {0};
		\draw (n1) -- (n2) node[midway, right] {1};
		
		\draw (n2) -- (n4b) node[midway, left]{0};
		\draw (n2) -- (n3) node[midway, right]{1};
		
		\draw (n3) -- (n4c) node[midway, left]{0};
		\draw (n3) -- (n4d) node[midway, right]{1};
		
		\node[right=0cm of n4a] {$v_1$};
		\node[right=0cm of n4b] {$v_2$};
		\node[right=0cm of n4c] {$v_3$};
		\node[right=0cm of n4d] {$v_4$};
		\node[right=0cm of n3] {$v_5$};
		\node[right=0cm of n2] {$v_6$};
		\node[right=0cm of n1] {$v_7$};
		
		\node[below=0cm of n4a] {$p_1=110$};
		\node[below=0cm of n4b] {$p_2=000$};
		\node[below=0cm of n4c] {$p_3=100$};
		\node[below=0cm of n4d] {$p_4=111$};
		
		\end{tikzpicture}

	\caption{Pseudo-Huffman tree for the function from Table~\ref{tab:tt}}
	\label{fig:huffman}
	\end{figure}
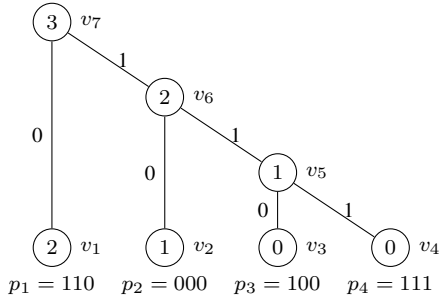

\subsection{Proving an Upper Bound of $n+1$ Qubits}
\label{sec:proof}

In this section, we prove that encoding the output patterns of an $n$-input function as shown above results in a coded function requiring at most $n+1$ variables. Moreover, we show precisely in which cases this additional qubit is required and in which not. To this end, we first formally define the Pseudo-Huffman tree utilized to determine the encoding.

\begin{definition}[Pseudo-Huffman Tree]
Let $G = (V,E)$ be a connected, arborescence (a directed rooted tree) composed of a set of nodes $V = \{v_1, v_2, \ldots, v_{|V|}\}$ and a set of edges \mbox{$E \subset V \times V$}, and let $w\colon V\to \mathbb{N}_0$ be a labeling of the graph nodes in terms of non-negative weights. Moreover, let $T=\{t\in V\mid\forall v \in V\colon (t,v) \notin E\} \subseteq V$ denote the set of all terminal nodes.
Then, $PH=(G, w)$ is called a \emph{Pseudo-Huffman tree}, if, and only if, \begin{enumerate}
\item each internal node $v \in V\setminus T$ has exactly two children $a,b\in V$ and $w(v) =\max(w(a),w(b))+1$ and 
\item for any two different internal nodes $v_1,v_2 \in V\setminus T$ with children $a_1,b_1$ and $a_2,b_2$, respectively, it holds that $w(a_1) \le w(b_1)$
 implies
\[ \big(w(a_2), w(b_2) \leq w(a_1)\big) \lor \big(w(a_2), w(b_2) \geq w(b_1)\big).\quad\quad\quad \]
\end{enumerate}
\end{definition}

In other words, the tree can be formed from the terminal nodes by successively combining nodes with the lowest available weights as described in Section~\ref{sec:coding}. 

The following theorem yields a condition on the terminal nodes of a Pseudo-Huffman tree that is sufficient to restrict the weight of the tree's root node.

\begin{theorem}
\label{theorem1}
Let $PH=((V,E),w)$ be a Pseudo-Huffman tree. If there exists an assignment $s_v$ for each terminal node $v\in T=\{t\in V\mid\forall v \in V\colon (t,v) \notin E\}$ such that \mbox{$2^{w(v)} \geq s_v > 2^{w(v)-1}$} (where $w(v)$ denotes the weight of node~$v$) and $\sum_{v\in T} s_v = 2^n$, then the weight $w(v_r)$ of the root node $v_r$ of the tree is either $n$ or $n+1$.
\end{theorem}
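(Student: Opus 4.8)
The plan is to translate the hypotheses on the $s_v$ into two-sided bounds on the quantity $M:=\sum_{v\in T}2^{w(v)}$ and then to control how $M$ relates to $2^{w(v_r)}$ along the greedy construction of the tree guaranteed by condition~2. Since $2^{w(v)-1}<s_v\le 2^{w(v)}$, summing gives $\sum_{v\in T}s_v\le M$ and $M<\sum_{v\in T}2s_v=2\sum_{v\in T}s_v$; using $\sum_{v\in T}s_v=2^n$ this yields $2^n\le M<2^{n+1}$. All the work then reduces to sandwiching $2^{w(v_r)}$ between $M$ and $2M$.

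For the lower bound $w(v_r)\ge n$, I would track the potential $\Phi:=\sum 2^{w(v)}$ summed over the current roots of the forest while the tree is assembled by repeatedly merging the two lowest-weight nodes. Condition~1 gives $2^{w(c)}=2^{w(b)+1}=2\cdot 2^{w(b)}\ge 2^{w(a)}+2^{w(b)}$ for a node $c$ with children $a,b$, so each merge leaves $\Phi$ non-decreasing (equivalently, the telescoping sum $\sum_{\text{internal }v}\big(2^{w(v)}-2^{w(a_v)}-2^{w(b_v)}\big)$ over the tree is $\ge 0$). Hence $2^{w(v_r)}=\Phi_{\text{final}}\ge\Phi_{\text{init}}=M\ge 2^n$, giving $w(v_r)\ge n$.

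The upper bound $w(v_r)\le n+1$ is the crux, and the naive idea of bounding $2^{w(v_r)}$ by $\Phi$ fails precisely because $\Phi$ is non-decreasing and can almost double (e.g.\ a lone light node merged against a much heavier one). The key observation I would exploit is that, because each step merges the two currently smallest weights, the minimum weight $m$ present in the forest is non-decreasing. This suggests the corrected potential $\Lambda:=\Phi-2^{m}$: writing $w_a\le w_b$ for the two smallest weights merged at a step, one has $m=w_a$ beforehand, every surviving node afterwards has weight $\ge w_b$, so the new minimum satisfies $m'\ge w_b$, and a short computation gives $\Lambda_{\text{new}}-\Lambda_{\text{old}}=2^{w_b}-2^{m'}\le 0$. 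Thus $\Lambda$ is non-increasing along the whole construction. Evaluating it at the penultimate configuration, where only the two children of $v_r$ remain with weights $p\le q=w(v_r)-1$, gives $\Lambda=2^{q}=2^{w(v_r)-1}$, whereas $\Lambda_{\text{init}}=M-2^{\mu}$ with $\mu=\min_{v\in T}w(v)$. Non-increase then yields $2^{w(v_r)-1}\le M<2^{n+1}$, i.e.\ $w(v_r)\le n+1$. Combined with the lower bound this forces $w(v_r)\in\{n,n+1\}$; the degenerate case $|T|=1$, where no merge occurs, is immediate since then $2^n=s_{v_r}\in(2^{w(v_r)-1},2^{w(v_r)}]$ forces $w(v_r)=n$.

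I expect the main obstacle to be exactly the identification of the monovariant $\Lambda$: the unadjusted $\Phi$ grows, so one needs the right correction term, and the fact that subtracting $2^{m}$ works hinges entirely on the greedy ordering of condition~2 (through $m$ being non-decreasing). A secondary point requiring care is that $\Lambda$ collapses to $0$ at the very last merge, so the bound must be read off one step earlier, at the two-node configuration; this is why the larger child of the root, of weight $w(v_r)-1$, is what ultimately gets controlled.
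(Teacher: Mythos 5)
Your proof is correct, but it reaches the upper bound by a genuinely different route than the paper. Both arguments pass to the exponentiated weights $2^{w(v)}$ and track the sum $\Phi$ over the roots of the forest during the greedy construction (relying, as you do, on the informal equivalence between condition~2 and the bottom-up greedy process), and your lower bound ($\Phi$ non-decreasing, so $2^{w(v_r)}\ge M\ge 2^n$) matches the paper's in substance. The difference is in bounding $\Phi$ from above: the paper argues arithmetically that when two roots of unequal weight $2^k>2^{k-l}$ are merged, all other roots are powers of two at least $2^k$, so the total jumps to the \emph{smallest} multiple of $2^k$ exceeding its previous value; since $2^{n}\le\Phi<2^{n+1}$ initially and $2^{n+1}$ is a multiple of every relevant $2^k$, the total can never overshoot $2^{n+1}$, leaving at most two roots of weight $2^n$ at the end. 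You instead replace this divisibility bookkeeping with the corrected potential $\Lambda=\Phi-2^{m}$ ($m$ the current minimum weight), verify the monovariant $\Lambda_{\mathrm{new}}-\Lambda_{\mathrm{old}}=2^{w_b}-2^{m'}\le 0$, and read off $2^{w(v_r)-1}\le\Lambda_{\mathrm{init}}<2^{n+1}$ at the penultimate (two-root) configuration, with the $|T|=1$ case handled directly. Your computations check out, including the subtle points that $m'\ge w_b$ and that $\Lambda$ must be evaluated one step before the final merge. Your approach is arguably cleaner and avoids the paper's side claims (e.g., that the unequal-weight case occurs at most once per $k$); what the paper's rounding argument buys in exchange is the Corollary almost for free, since tracking exactly when and by how much $\Phi$ increases shows the root has weight $n$ if and only if $\sum_{v\in T}2^{w(v)}=2^n$, a dichotomy your $\Lambda$-monovariant does not immediately yield without additional work.
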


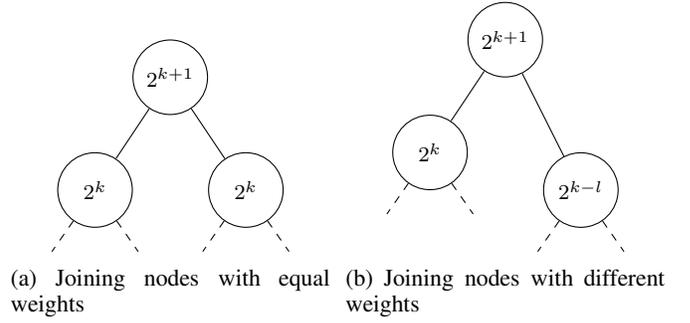
\begin{figure}
	\centering
	\begin{subfigure}[b]{0.48\linewidth}
		\centering
		\begin{tikzpicture}
		
		\node [circle, draw] (a) {$\phantom{2^{k+1}}$};
		\node [circle, draw, right=of a] (b) {$\phantom{2^{k+1}}$};
		\node [circle, draw] (c) at ([xshift=1cm,yshift=1.5cm] a) {$2^{k+1}$};
		\node at (a) {$2^k$};
		\node at (b) {$2^k$};
		\draw [dashed] (a) -- ++(235:1cm);
		\draw [dashed] (a) -- ++(305:1cm);
		\draw [dashed] (b) -- ++(235:1cm);
		\draw [dashed] (b) -- ++(305:1cm);
		\draw (a)--(c)--(b);
		\end{tikzpicture}

		\caption{Joining nodes with equal weights}
		\label{fig:nodeCombinationSameWeight}
	\end{subfigure}	\hfil
	\begin{subfigure}[b]{0.48\linewidth}
		\centering
		\begin{tikzpicture}
		\node [circle, draw] (a) {$\phantom{2^{k+1}}$};
		\node [circle, draw, right=of a, yshift=-5mm] (b) {$\phantom{2^{k+1}}$};
		\node [circle, draw] (c) at ([xshift=1cm,yshift=1.5cm] a) {$2^{k+1}$};
		\node at (a) {$2^k$};
		\node at (b) {$2^{k-l}$};
		\draw [dashed] (a) -- ++(235:1cm);
		\draw [dashed] (a) -- ++(305:1cm);
		\draw [dashed] (b) -- ++(235:1cm);
		\draw [dashed] (b) -- ++(305:1cm);
		\draw (a)--(c)--(b);
		\end{tikzpicture}

	\caption{Joining nodes with different weights}
	\label{fig:nodeCombinationDifferentWeight}
	\end{subfigure}
	\caption{Joining nodes in the construction of the PH-tree}
\end{figure}

\begin{proof}
	Replace all weights using the rule $w\mapsto 2^w$. Then the rule for computing the weight of a new node changes from $\max(w(a),w(b))+1$ to $2\cdot\max(w(a),w(b))$. Accordingly, all weights in the tree will be a power of 2.
	
	We perform the proof by arguing about the weights of the nodes when constructing a Pseudo-Huffman tree. To this end, consider the set of all nodes $V^i_r$ of the tree-under-construction that are the root nodes of the already connected components after step $i$ of the algorithm. Let  $w^i_{total}=\sum_{v\in V_r^i} w(v)$ denote the sum of the weights over all these nodes.
	
	At each step $i$ of the algorithm, two nodes $a,b \in V^i_r$ with minimal weight are chosen and joined to a new node $c$ such that $V^{i+1}_r = \{c \}\cup V^i_r \setminus\{a, b\}$.
	There are two cases: 
	\begin{enumerate}
		\item both nodes $a$ and $b$ have the same weight $2^k$. Then, they are replaced by a node with weight $2^{k+1}$ such that 
		$w_{total}^{i+1} = w_{total}^i$ (see Fig.~\ref{fig:nodeCombinationSameWeight}), i.e.,~the sum of the weights over the root nodes remains constant.
		\item one node---assume without loss of generality $a$---has weight $w(a)=2^k$ and the other node ($b$) has weight $w(b)=2^{k-l}$ for some $k\ge l>0$. Then, they are replaced by a node $c$ with weight $w(c)=2^{k+1}$ (see Fig.~ \ref{fig:nodeCombinationDifferentWeight}).
		
		Since we always take the nodes with minimal weight, there might not be any other node $d \in V^i_r$ with \mbox{$w(d)<2^k$} as this node would have a higher priority to be joined with $b$. Thus, all nodes in $V^i_r$ aside from $b$ have a weight that---by construction---is a power of 2 that is greater than or equal to $2^k$. Consequently, after joining $a$ and $b$, $w^i_{total}$ is increased to a number $w^{i+1}_{total}$ that is divisible by~$2^k$. More precisely, it is increased by
		
		\begin{align*}
		w(c)-w(a)-w(b) &= 2^{k+1}-2^k-2^{k-l}\\
		&= 2^k-2^{k-l}\\
		&< 2^k,
		\end{align*}
		such that  $w^{i+1}_{total}$ is the \emph{smallest} number that is greater than $w^i_{total}$ and divisible by $2^{k}$.
		
		Clearly, this case happens at most once for each $k>0$, since afterwards there is no more node in $V^{i+1}_r$ with a weight less than $2^k$ and all nodes that will be added to $V^{j}_r$ (for $j>i$) have higher weights.
	\end{enumerate}
	
	\vspace{200cm}
	
	By the assumption of Theorem~\ref{theorem1}, we initially have \mbox{$2^n = \sum_{v\in T} s_v \leq  w^0_{total}$} and $w^0_{total} < 2^{n+1}$.
	Thus, we will at some point denoted $final$ reach the case that all nodes in $V^{final}_r$ have a weight greater than or equal to $2^n$ such that $w^{final}_{total}$ is divisible by $2^n$.  
	Since $2^{n+1}$ is divisible by all potencies $2^k$ for $k=0,\ldots,n$, $w^{final}_{total}$ will never exceed $2^{n+1}$, as we are always increasing $w^i_{total}$ to the smallest larger number divisible by $2^{k}$ for a $k\in\{1,\ldots,n\}$.
	Consequently, we have at least one and at most two nodes in $V^{final}_r$ with a weight of $2^n$. Thus, the root node of the resulting tree either is the single node with weight $2^n$ or the single node with weight $2^{n+1}$ constructed from the two nodes with weight $2^n$.
	Hence, the root node of the original Pseudo-Huffman tree has weight $n$ or $n+1$ as desired.
\end{proof}

Now let us interpret this result in the setting of coded Boolean functions.
Consider a Boolean function $f: \B^n\rightarrow \B^m$ to be encoded. 
We can construct a Pseudo-Huffman tree with $|T| = |\{p_i \in \B^m\mid \mu(p_i)>0\}|$ terminal nodes (which is always possible), where each terminal node $v\in T$ uniquely corresponds to one output pattern $p_i$ and has assigned $s_v = \mu(p_i)$ (thus, having a weight $w(v) = \lceil \log_2 \mu(p_i)\rceil$).
As this assignment clearly satisfies the conditions of Theorem~\ref{theorem1}, the height of this tree is either $n$ or $n+1$. Hence, there exists a coding (which is inherently given by the constructed tree) that requires at most one additional qubit when realizing $f$ in quantum logic.

Moreover, we can precisely determine in which cases this additional qubit is required. In fact, the additional qubit is required whenever there exists an output pattern $p_i$ where $\mu(p_i) > 0$ is not a power of two.

\begin{corollary} 
The root node of a Pseudo-Huffman tree satisfying the same assumptions as in Theorem~\ref{theorem1} has weight $n$ if, and only if, 
$\sum_{v\in T} 2^{w(v)} = 2^n$.
\end{corollary}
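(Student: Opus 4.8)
The plan is to read all weights in the power-of-two normalization $w\mapsto 2^{w}$ used in the proof of Theorem~\ref{theorem1}, under which the left-hand side of the claimed identity is exactly the initial total weight $w^0_{total}=\sum_{v\in T}2^{w(v)}$. The backbone is the invariant already established there: $w^i_{total}$ is non-decreasing along the construction, and once the single root $v_r$ remains we have $w^{final}_{total}=2^{w(v_r)}$, which Theorem~\ref{theorem1} pins down to be either $2^{n}$ or $2^{n+1}$. Hence ``root weight $n$'' is equivalent to ``$w^{final}_{total}=2^{n}$'', and the corollary reduces to comparing the initial and final values of $w_{total}$.

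First I would dispatch the forward direction, which is immediate. If $w(v_r)=n$, then $w^{final}_{total}=2^{n}$, and monotonicity gives $w^0_{total}\le 2^{n}$. On the other hand the hypothesis $2^{w(v)}\ge s_v$ yields $\sum_{v\in T}2^{w(v)}\ge\sum_{v\in T}s_v=2^{n}$, so $w^0_{total}=2^{n}$, which is precisely the asserted identity.

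The substance is the converse. Assuming $w^0_{total}=\sum_{v\in T}2^{w(v)}=2^{n}$, I would show that the construction only ever performs Case~1 joins (equal weights, Fig.~\ref{fig:nodeCombinationSameWeight}), so $w_{total}$ stays frozen at $2^{n}$ and the root therefore carries normalized weight $2^{n}$, i.e.\ original weight $n$. The key lemma, proved by induction on the number of surviving root nodes, states that whenever the current multiset of (power-of-two) node weights sums to a power $2^{m}$ and more than one node remains, the smallest present weight occurs an even number of times. Indeed, writing $2^{j}$ for that smallest weight, the presence of at least two nodes forces $2^{m}\ge 2^{j+1}$, hence $m>j$; reducing the sum modulo $2^{j+1}$ then annihilates every strictly larger weight and leaves $c\cdot 2^{j}\equiv 0\pmod{2^{j+1}}$, where $c$ is the number of minimal nodes, so $c$ is even. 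Consequently the two minimal nodes selected for joining both have weight $2^{j}$, the join is of Case~1 type, the total is preserved, and the new configuration again sums to a power of two with one fewer node, closing the induction down to the single root of normalized weight $2^{n}$.

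The main obstacle is exactly this converse, and within it the task of ruling out any Case~2 join: a single unequal-weight join would increase $w_{total}$ strictly and push it from $2^{n}$ up to $2^{n+1}$, forcing root weight $n+1$. The parity lemma above is what makes Case~2 impossible once the initial sum is already a power of two, and I expect its crisp formulation—minimal weight has even multiplicity under a power-of-two total—to be the crux that drives the whole argument.
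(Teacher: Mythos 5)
Your proof is correct and follows essentially the same route as the paper: both reduce the claim to the $w_{total}$ bookkeeping from Theorem~\ref{theorem1} under the $w\mapsto 2^w$ normalization, prove the forward direction by combining monotonicity of $w_{total}$ with $2^{w(v)}\ge s_v$, and prove the converse by showing that a total of exactly $2^n$ rules out any unequal-weight (Case~2) join. Your parity lemma (minimal weight has even multiplicity when the total is a power of two) is in fact a welcome sharpening, since it rigorously establishes the step the paper dismisses with ``this is clearly the case if $w^0_{total}=2^n$.''
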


\begin{proof}
Given that $\sum_{v\in T} 2^{w(v)} = 2^n$ we may apply Theorem~\ref{theorem1} by using the assignment $s_v=2^{w(v)}$ for all $v \in T$.
Following the argumentation in the proof of Theorem 1, the root node of the Pseudo-Huffman tree has weight $n$ if $w^{i}_{total}$ does not exceed $2^n$ at any time, i.e.,~if the second case (which increases $w^i_{total}$) does not occur at all. This is clearly the case if $w^0_{total}=2^n$ in the beginning.
Conversely, if \mbox{$\sum_{v\in T} 2^{w(v)}\neq 2^n$}, we have $w^0_{total}>2^n$ in the beginning, such that $w^{final}_{total}=2^{n+1}$ in the end.
\end{proof}

\section{Comparison to Embeddings without Coding}
\label{sec:results}

In this section, we compare the idea of coded embeddings to previous approaches and discuss their effect on the design of quantum oracles.

\subsection{Evaluation}

We compare the idea of coded embeddings to approaches that do not consider coding when realizing a Boolean function $f\colon\B^n\rightarrow \B^m$ in quantum logic. More precisely, we compare to exact methods utilizing $\max(n, m+\lceil \log_2 \mu(p_1)\rceil)$ qubits~\cite{Soeken:2015:ELB:2856147.2786982,ZulehnerW17Emb} as well as to heuristic ones that always utilize a Bennett embedding with $n+m$ qubits~\cite{Bennett:73,WKD:2011} (e.g.,~generated when using an ESoP based synthesis approach~\cite{FTR:2007}). To this end, we have implemented the proposed idea in C++ and utilized the QMDD package~\cite{DBLP:journals/tcad/NiemannWMTD16} as well as the BDD package CUDD~\cite{somenzi2015cudd} to gain a compact representation of the considered functions---allowing us to determine the number of required qubits in negligible runtime. As benchmarks we use the functions from RevLib~\cite{WGT+:2008}, as well as from the ISCAS~\cite{iscas} and IWLS~\cite{McE:93} benchmark suites.\footnote{Note that we only consider non-reversible functions from these benchmarks suits since reversible ones do not require embedding.}

Table~\ref{tab:results} summarizes the obtained results. The first three columns of the benchmark as well as the number of inputs $n$ and the number of outputs $m$. In the next three columns we list the number of required qubits when using Bennett embedding (i.e.,~$m+n$), when using a minimal encoding without considering coding (i.e.,~$\max(n, m+\lceil\log_2\mu(p_1)\rceil)$), and when using coded embeddings as described in this work (i.e.,~$n$ or $n+1$), respectively.

\begin{table}
	\caption{Number of required qubits}
	\label{tab:results}
	
	\setlength{\tabcolsep}{5pt}
	\centering
	\begin{tabular}{l|r|r||r|r||r}
		\multicolumn{3}{c||}{Benchmark} & \multicolumn{3}{c}{Embedding} \\
		name & $n$ & $m$ & Bennett~\cite{Bennett:73} & Min.~\cite{Soeken:2015:ELB:2856147.2786982,ZulehnerW17Emb} & Encoded \\\hline
		\csvreader[
		late after line=\\,
		late after last line=\\,
		]{results.csv}
		{1=\Name,2=\In, 3=\Out, 4=\Bennett, 5=\Min, 6=\Coded}
		{\Name & \In & \Out & \Bennett & \Min & \Coded}
		
	\end{tabular}
	\vspace{-4mm}
\end{table}

As can be seen in Table~\ref{tab:results}, the number or required qubits can significantly be reduced when considering coded embeddings---especially in cases where $m>n$. Consider for example benchmarks \emph{cps\_140} and \emph{e64\_149}, where the number of required qubits can be reduced by 107 and 65, respectively, using coding techniques. Overall, a possible reduction of 36.4\% can be observed on average.

\vspace{200cm}

\subsection{Discussion}
Concerning the design of quantum oracles, coded embeddings as proposed above can be exploited in two different ways:
\begin{itemize}
\item On the one hand, one can apply the coding technique \emph{locally} on each and every sub-component and use decoders (after each sub-component) to translate the encoded results to the original ones which are then used as inputs of the subsequent components. This essentially reduces the complexity of synthesis for the individual sub-components (since a smaller number of qubits needs to be considered). While this offers a significant improvement of synthesis run-time (as also observed in~\cite{ZulehnerW18Exploiting}), the total number of additional qubits does not change (due to the decoders).
\item On the other hand, one can apply the coding technique \emph{globally} such that the encoded outputs of one sub-component are directly used as input for subsequent components and a single decoder at the end translates the final results to the desired ones. This approach significantly reduces the number of extra qubits required during the computation of the oracle's sub-components such that the total number of extra qubits is likely to stay close to the theoretical minimum given by the oracle's overall functionality (which is zero). On the downside, a re-design of the sub-components might be required in order to work with encoded values. 
\end{itemize}

\section{Conclusions}
\label{sec:conclusions}

In this work, we have proven that one additional qubit is enough to determine a coded embedding of any non-reversible function for quantum circuits. By this, one can significantly reduce the overall number of qubits required for realizing Boolean oracles, since their functionality is usually split into several non-reversible parts.
Our experimental evaluation shows that the number of required qubits can indeed be reduced by 36.4\% on average, when comparing to embeddings that do not utilize encoding and that have been considered as the minimum thus far. Possible applications on the design of oracles for quantum circuits are discussed.

\section*{Acknowledgements}
This work has partially been supported by the European Union through the COST Action IC1405.

\bibliographystyle{unsrt}
{
	\bibliography{literature}
}

\end{document}